\DeclareMathOperator{\diag}{diag}
\DeclareMathOperator{\sign}{sign}
\DeclareMathOperator{\imm}{Imm}
\DeclareMathOperator{\per}{per}
\DeclareMathOperator{\tr}{tr}
\DeclareMathOperator{\SEP}{SEP}
\DeclareMathOperator{\id}{id}
\newcommand{\bra}[1]{\mathinner{\langle #1|}}
\newcommand{\ket}[1]{\mathinner{|#1\rangle}}
\newcommand{\dyad}[1]{| #1\rangle \langle #1|}
\newcommand{\ot}[0]{\otimes}
\newcommand{\one}[0]{\mathds{1}}
\newcommand{\cdn}{(\C^d)^{\otimes n}}
\newcommand{\R}{\mathds{R}}
\newcommand{\C}{\mathds{C}}
\newcommand{\WW}{\mathcal{W}}
\newcommand{\E}{\mathcal{E}}
\newcommand{\vast}{\bBigg@{4}}
\newcommand{\Vast}{\bBigg@{5}}
\newtheorem{theorem}    {Theorem}
\newtheorem{cor}    {Corollary}
\newtheorem{proposition}[theorem]{Proposition}
\newtheorem{observation}[theorem]{Observation}
\begin{document}

\title{
Entanglement and distillation from symmetric positive maps
}
\author{Albert Rico${}^{\orcidlink{0000-0001-8211-499X}}$}
\affiliation{
Physics department, Universitat Autònoma de Barcelona,
ES-08193 Bellaterra (Barcelona), Spain;}
\affiliation{
Faculty of Physics, Astronomy and Applied Computer Science, Institute of Theoretical Physics, Jagiellonian University,
30-348 Krak\'{o}w, 
Poland}
\date{\today}
\begin{abstract}
Recently, a toolkit of highly symmetric techniques employing matrix inequalities and positive maps has been developed to detect entanglement in various ways. Here we concisely review these methods providing a unified framework, and expand them to a new family of positive maps with further detection capabilities. 
In the simplest case, we generalize the reduction map to detect more generic states using both multiple copies and local filters. Through the Choi-Jamio{\l}kowski isomorphism, this family of maps leads to a construction of multipartite entanglement witnesses. Discussions and examples are provided regarding the detection of states with local positive partial transposition and the use of multiple copies.
\end{abstract}

\maketitle

\section{Introduction}
In recent years, the implementation of quantum protocols has gained refined control on mutlipartite and high-dimensional quantum systems~\cite{advancesHDent_Erhard2020,EfficientLargeScaleMBdyn_Artaco2024}. These advances are the groundwork for multiple applications like multipartite quantum networks~\cite{LargScaleQNetworksKozlowski_2019,NetworkGMPE_Navascues2020} and mid-scale quantum computation~\cite{MPEntSuperQubits_Lu2022}, where entanglement is crucially a major nonclassical resource. Therefore, the realistic implementation of quantum protocols requires characterizing and detecting the presence of entanglement in multipartite quantum systems~\cite{MPEntSuperQubits_Lu2022,DetMPentManyBody_Frerot2022}.  However, this is a challenging theoretical and experimental problem due to the growth of the state space~\cite{GURVITS2003sepNPhard,DetMBodyContinuous_Kunkel2022}. 

In search of effective solutions, a variety of methods displaying symmetries have been introduced both in theory and experiments~\cite{Neven_SymmetryResMomentsPT2021,Elben2020,DetMBodyPermuMoments_Liu2022,Huber2022DimFree}. These allow to better explore a desired corner of the space of positive maps and witnesses detecting entanglement. For this purpose, well-established fundamentals in matrix inequalities and the permutation group have been revisited~\cite{Neven_SymmetryResMomentsPT2021,Huber2021MatrixFO,PmapsWBAlg_Miqueleta2024}. In particular, increasing attention~\cite{MaassenSlides,Huber2020PositiveMA,Huber2021MatrixFO,TP_Rico24,PmapsWBAlg_Miqueleta2024} has been paid to a family matrix inequalities known as {\em immanant inequalities}~\cite{marshall1979inequalities}: linear combinations of permutations of products of matrix entries, that are nonnegative for all positive semidefinite matrices. A well known example is the determinant of a $k\times k$ matrix $G$,
\begin{equation}\label{eq:DetIntro}
    \det(G)=\sum_{\pi\in S_k}\sign(\pi)\prod_{i=1}^k G_{i,\pi^{-1}(i)}\,,
\end{equation}
where $S_k$ is the symmetric group and $\pi\in S_k$ are the possible permutations of $k$ elements. A simple immanant inequality states that the determinant is nonnegative, $\det(G)\geq 0$, for any positive semidefinite matrix $G\geq 0$. 

Immanant inequalities can be used to construct symmetric linear entanglement witnesses~\cite{MaassenSlides}. These can be grouped to construct local unitary-invariant multicopy witnesses that can be evaluated with randomized measurements~\cite{Elben22Toolbox}, when multiple copies of each local party evaluate an independent observable~\cite{TP_Rico24}. The construction of such witnesses employs trace polynomials~\cite{Huber2020PositiveMA,Huber2021MatrixFO,Huber2022DimFree}, which are homogeneous polynomials involving a set of matrices and their traces, and allow to construct positive maps invariant under the diagonal action of the unitary group. Therefore, trace polynomials and matrix inequalities together compose a toolkit that has been proven useful for entanglement detection and particularly suitable to systems with certain symmetries, but whose limitations are currently unknown.

Here we further develop this toolkit to new detection capabilities, for finite-dimensional Hilbert spaces. We present families of positive maps and witnesses constructed form projections onto subspaces with specific symmetries  (Proposition~\ref{prop:GeneralMap}), and show advantage over previous methods~\cite{Huber2020PositiveMA,Huber2021MatrixFO} in simple cases (Figure~\ref{fig:NPTabcTriangle}, Table~\ref{tab:DetectionRateAmap} and Observation~\ref{obs:ActivationReduction}). 
Using such maps, we construct a family of symmetric entanglement witnesses (Proposition~\ref{prop:WitPartContr}) as a special case of~\cite{short}. We expand on the results given in~\cite{short} in this regard: we compare the performance of witnesses constructed in this way with respect to their building blocks, in detecting states with local positive partial transpositions (Fig.~\ref{fig:ContrWit}, and Table~\ref{tab:PPTi}). Further examples detecting multiple copies are given.

The contributions are presented in the following scheme. In Section~\ref{sec:Preliminaries} we introduce the toolikt we use: we sketch how matrix inequalities and trace polynomials have been used in recent literature to detect entanglement.  In Section~\ref{sec:BipEntPmapsResults} we provide a new family of positive maps generalizing existing ones, which are equipped with distillation filters to better detect entanglement. In Section~\ref{sec:MultipEnt} we use the state-witness contraction method~\cite{short} to construct both linear and nonlinear multipartite witnesses using projectors onto selected subspaces.

\section{Preliminaries}\label{sec:Preliminaries}
\subsection{Entanglement detection: positive maps and witnesses}\label{subsec:ED,PM,WitsIntro}
A prominent tool to detect entanglement in a quantum state consists of positive maps $\mathcal{E}$, satisfying $\E(\varrho)\geq 0$ for all states $\varrho\geq 0$ such that bipartite states are detected through $\E_A\ot\id_B(\varrho_{AB})\not\geq 0$. It was shown in~\cite{horodecki2001separability} that a bipartite state $\varrho_{AB}$ is entangled if and only if there exists a positive map detecting it. Two examples of positive maps that will be used in this work are the transposition $\ket{i}\bra{j}^T=\ket{j}\bra{i}$ acting on one subsystem of a bipartite state $\varrho_{AB}$, known as the {\em partial transpose}~\cite{peres1997quantum}; and the {\em reduction map}~\cite{HoroRedCrit_1999}
\begin{equation}\label{eq:RedMap}
    R(\varrho)=\tr(\varrho)\one-\varrho
\end{equation}
acting on one subsystem of a bipartite state $\varrho_{AB}$, which is weaker than the partial transpose but ensures distillability of the detected states~\cite{HoroRedCrit_1999}. We introduce the reduction map because it will be naturally generalized later through the techniques introduced in this manuscript. Through the Choi-Jamio{\l}kowski isomorphism~\cite{choi1975completely,jamiolkowski1972linear}, each positive map that is not completely positive provides an entanglement witness $W$~\cite{GeoQuantStates2006} that detects entanglement if it has a negative expectation value, $\tr(W\varrho)<0$. We call $W$ as the {\em Jamio{\l}kowsk matrix} of the map it defines~\cite{GeoQuantStates2006}.

States that cannot be distilled to pure entangled states with local operations and classical communication (LOCC)~\cite{HorodeckiBoundEnt_1998} are known as {\em bound entangled}. This is the case of states with a positive partial transposition $T_A$ (PPT),  $\varrho^{T_A}\geq 0$. PPT states cannot be detected by so-called {\em decomposable} witnesses decomposing as $W = X + Y^{T_A}$ with $X,Y\geq 0$ positive semidefinite~\cite{OptEW_Lewenstein2000}.  Within the various ways these notions can be extended to the multipartite case~\cite{4partUnlockBound_Smolin2001,SepDistMultiPartBoundUnlock_Dur1999}, we consider the case of spatially separated local parties: we say that a $k$-partite state $\varrho$ is {\em locally bound entangled} if pure multipartite entanglement cannot be distilled when the local parties are restricted to LOCC, but joint operations are required~\cite{
4partUnlockBound_Smolin2001,
QSSMultiBoundEnt_Zhou2018,
SepDistMultiPartBoundUnlock_Dur1999,
BoundMaxViolBellUnlockApplic_Augusiak2006,
BellCorActBoundEnt_Bandy2005,
RemoteInfoConcUnlockBoundApplic_Murao2001,
QComComplexBoundUnlockApplic_Bruckner2002}. This occurs if all local partial transpositions are positive semidefinite.

\subsection{Immanants}\label{subsec:imms}
Let us introduce some notation we need concerning the group of permutations, which will be exemplified later in well-known cases. Consider the group $S_k$ of permutations $\pi\in S_k$ of $k$ elements, known as the symmetric group. A partition $\lambda=(\lambda_1,...,\lambda_{|\lambda|})\vdash k$ of a natural number $k$ satisfies $\lambda_1\geq...\geq\lambda_{|\lambda|}$ and $\lambda_1+...+\lambda_{|\lambda|}=k$, for example $(1,1)\vdash 2$ with $1+1=2$ and $(2,0)\vdash 2$ with $2+0=2$. Each partition $\lambda$ denotes an irreducible representation of the symmetric group $S_k$ and has a character $\chi_\lambda(\pi)$. Given a $k\times k$ matrix $G$, the {\em immanant} corresponding to the partition $\lambda$ reads
\begin{equation}\label{eq:immanant}
    \imm_\lambda(G) = \sum_{\pi \in S_k} \chi_\lambda(\pi) \prod_{i=1}^k G_{i \pi^{-1}(i)}\,.
\end{equation}
The product of diagonal elements $G_{ii}$ is obtained as a combination of all immanants of $G$ as $\prod_{i=1}^k G_{ii}=\sum_\lambda\chi_\lambda(\text{id})\imm_\lambda(G)/k!$, where the permutation $(\text{id})$ is the identity~\cite{marshall1979inequalities}. 

The extreme cases of immanants are well-known: 
\begin{itemize}
    \item The {\em determinant} $\det(G)=\imm_{1^k}(G)$ corresponds to the partition $\lambda=[1,1,...,1]:=[1^k]$ (associated with the antisymmetric representation of $S_k$), whose character $\chi$ is the sign of each permutation $\pi\in S_k$, $\chi(\pi)=\sign(\pi)$. The determinant is explicitly written in Eq.~\eqref{eq:DetIntro}.
    \item The {\em permanent} $\per(G)=\imm_{k}(G)$ corresponds to the partition $\lambda=[k,0,...,0]:=[k]$ (associated with the symmetric representation of $S_k$) whose character is trivial, $\chi(\pi)=1$ for all $\pi\in S_k$. The permanent is written as
    \begin{equation}\label{eq:permanent}
    \per(G) = \sum_{\pi \in S_k}  \prod_{i=1}^k G_{i \pi^{-1}(i)}\,.
\end{equation}
\end{itemize}
For $2\times 2$ matrices $G\geq 0$, only the permanent $\per(G)=G_{11}G_{22}+G_{12}G_{21}$ and determinant $\det(G)=G_{11}G_{22}-G_{12}G_{21}$ are defined since only  $(2,0)\vdash 2$ and $[1,1]\vdash 2$ partition the integer $2$. For $3 \times 3$ matrices one has three partitions: $[3,0,0]$, $[2,1,0]$ and $[1,1,1]$. The first and last partitions correspond to the permanent and the determinant. The partition $[2,1,0]$ is associated to the {\em standard} representation of the three-element permutation group $S_3$, and has characters $\chi(\id)=2$, $\chi(12)=\chi(23)=\chi(13)=0$ and $\chi(123)=\chi(132)=-1$. Therefore its corresponding immanant is given by
\begin{equation}
    \imm_{(2,1)}=2\prod_{i=1}^3G_{ii}-\prod_{i=1}^3G_{i,i+1}-\prod_{i=1}^3G_{i,i+2},
\end{equation}
where the index summation is modulo $3$.

\subsection{Immanant inequalities}\label{subseq:ImmIneqs}
Some inequalities involving immanants hold for any positive semidefinite matrix $G\geq 0$. These are known as {\em immanant inequalities}, have been extensively studied over years~\cite{bhatia2013matrix,marshall1979inequalities}, and yet still host open problems~\cite{Lieb_Permanent_Conj_1996}. Immanant inequalities have the form
\begin{equation}\label{eq:GenImmEq}
    \sum_{\lambda\vdash k}a_\lambda\imm_\lambda(G)\geq 0\quad\forall G\geq 0,
\end{equation}
with $a_\lambda\in\R$. For example, the determinant of a matrix $G$ in Eq.~\eqref{eq:DetIntro} equals the product of eigenvalues of $G$ and therefore is nonnegative for all $G\geq 0$. A less trivial example is the following~\cite{marshall1979inequalities}:
\begin{equation}\label{eq:HadamardIneq}
    \prod_{i=1}^k G_{ii}-\det(G)\geq 0
\end{equation}
for all $G\geq 0$, known as Hadamard inequality. 

In this work we will construct positive maps and witnesses combining immanants corresponding to different partitions $\lambda,\mu,...,\nu$ of an integer $k$. Therefore, we need to introduce a  shorthand notation to label each case. To do this, first note that each immanant inequality in Eq.~\eqref{eq:GenImmEq} is uniquely determined by its coefficients $a_\lambda$. Using this fact, we will identify each immanant inequality with a vector of real components,
\begin{equation}\label{eq:vecParts}
    \Vec{a}=(a_\lambda,a_\mu,...,a_\nu),
\end{equation}
where $\lambda,\mu,...,\nu$ are the possible partitions of an integer $k$ in a fixed order. Here this order will be fixed, using the convention that the components $\lambda_i$ of $\lambda=[\lambda_1,...,\lambda_{|\lambda|}]$ are in non-increasing order: when comparing two partitions $\lambda$ and $\mu$, let $\lambda_i$ be the first component of $\lambda$ such that $\lambda_i>\mu_i$. Then $\lambda$ preceeds $\mu$ if $\lambda_i>\mu_i$. If a partition $\lambda$ is not involved in the immanant inequality under consideration, its coefficient is $a_\lambda = 0$. 

Let us illustrate the notation above with an example. Recall that each vector $\Vec{a}$ in Eq.~\eqref{eq:vecParts} denotes an immanant inequality $\psi_{\vec{a}}(G)\geq 0$ defined by
\begin{equation}\label{eq:ImmNotation}
    \psi_{\vec{a}}(G):=a_\lambda\imm_\lambda(G)+...+a_\nu\imm_\nu(G).
\end{equation}
For example, consider the case $k=2$ with two possible partitions, $\lambda=[2,0]$ and $\mu=[1,1]$. 
For $2\times 2$ matrices $G$, the inequality $\imm_{[1,1]}(G):=\det(G)\geq 0$ is given by $a_{\lambda=[2,0]}=0$ and $a_{\mu=[1,1]}=1$. These coefficients are ordered as $a_{[2,0]}$ preceding $a_{[1,1]}$ because $\lambda_1=2>1=\mu_1$. Therefore they compose the vector $(a_{[2,0]},a_{[1,1]})=(0,+1)$ determining the following inequality:
\begin{align}
    \psi_{a_{[2,0]},a_{[1,1]}}(G):&=a_{[2,0]}\imm_{[2,0]}(G) + a_{[1,1]}\imm_{[1,1]}(G) \nonumber\\
    &=0\cdot \per(G) + 1\cdot\det(G)\nonumber\\
    &=\det(G)\geq 0\,.
\end{align}

\subsection{Entanglement witnesses from immanant inequalities}\label{subseq:EWfromImm}
Recently, a bridge has been found between immanant inequalities and the detection of entanglement~\cite{MaassenSlides,Huber2021MatrixFO,TP_Rico24}. The key idea is that given a $k\times k$ positive semidefinite matrix $G\geq 0$ with entries $G_{ij}=\bra{v_i}v_j\rangle$, where $\{\ket{v_i}\}_{i=1}^k$ is a set of any vectors, Eq.~\eqref{eq:immanant} can be written as
\begin{equation}\label{eq:immanant2}
    \imm_\lambda(G)=\frac{k!}{\chi_\lambda(\text{id})}\tr(P_\lambda\dyad{v_1}\otimes\cdots\otimes\dyad{v_k})\,.
\end{equation}
Here 
\begin{equation}\label{eq:YoungProjector}
P_\lambda=\frac{\chi_\lambda(\id)}{k!}\sum_{\pi\in S_k}\chi_\lambda(\pi)\eta_d(\pi^{-1})\,
\end{equation}
is the Young projector onto the irreducible subspace of ${(\C^d)}^{\otimes k}$ invariant under the irreducible representation $\lambda$ of the symmetric group and the unitary group, where $\eta_d(\pi)$ permutes $k$ tensor factors as
\begin{equation}
\eta_d(\pi)\ket{i_1} \ot \dots\ot \ket{i_k}
=
\ket{i_{\pi^{-1}(1)}} \ot \dots \ot \ket{i_{\pi^{-1}(k)}}\,.
\end{equation}
For example, $\eta_3(12)\ket{u}\otimes\ket{w}=\ket{w}\otimes\ket{u}$ permutes two qutrit states $\ket{u},\ket{w}\in\C^3$. For two qubits, all Young projectors are the symmetric (triplet) and antisymmetric (singlet) projectors, $P_{2,0}=\dyad{00}+\dyad{11}+\dyad{\psi^{+}}$ and $P_{1,1}=\dyad{\psi^-}$, with $\ket{\phi^\pm}=(\ket{01}\pm\ket{10})/\sqrt{2}$. The three-qutrit Young projectors are written in Appendix~\ref{App:YoungProjectors}.

Consider a separable state, $\varrho_{\SEP}=\sum_ip_i\dyad{v_1^i}\otimes\cdots\otimes\dyad{v_k^i}$. It follows from Eq.~\eqref{eq:immanant2} that for any immanant inequality $\sum_\lambda a_\lambda\imm_\lambda(X)\geq 0$ there exists an associated operator
\begin{equation}\label{eq:WitMaassen}
    R = \sum_{\lambda\vdash k}\frac{k!}{\chi_\lambda(\text{id})}a_\lambda P_\lambda
\end{equation}
that has nonnegative expectation value on all separable states, $\tr(R\varrho_{\text{SEP}})\geq 0$. These operators $R$ are entanglement witnesses if they have some negative eigenvalue, $R\not\geq 0$, or positive operators otherwise. For example, from the inequality~\eqref{eq:HadamardIneq} one can derive that
\begin{equation}
\tr(\eta_d(12)\varrho_{\text{SEP}})\geq 0\,,
\end{equation}
where the SWAP operator $\eta_d(12)$ is a witness detecting bipartite entanglement; and the inequality $\det(G)\geq 0$ on $2\times 2$ positive matrices $G\geq 0$ leads to $\tr(P_{1,1}\varrho_{\SEP})\geq 0$ which is trivial since the projector onto the two-qudit antisymmetric subspace
\begin{equation}\label{eq:2quditAntiSym}
    P_{1,1}=\frac{\one-\eta_d(12)}{2}
\end{equation}
is by definition positive semidefinite. When confusions may arise we shall adopt the following notation: if $R$ is a witness, it will be denoted by $W$; and if $R$ is a positive operator, it will be denoted as $M$. An operator defined from an immanant inequality with coefficients $\vec{a}=(a_\lambda,a_\mu,...,a_\nu)$ shall therefore be distinguished as $M_{\vec{a}}$ or $W_{\vec{a}}$. 

To end this explanation we highlight that, although  entanglement witnesses based on immanant inequalities are fundamental objects because Young projectors decompose the Hilbert space, their capabilities are currently at the beginning of their exploration. For instance, it is unknown for instance whether they can detect bound entanglement on bipartite states through their action on multiple copies.

\subsection{Positive maps from immanant inequalities}\label{eq:PMapsfromImm}
Trace polynomials are polynomial functions on input matrices and and their traces, for example $XY+\one\tr(XY)-X\tr(Y)$.
Using trace polynomials, the derivations in~\cite{lew1966generalized,Huber2020PositiveMA} introduced a family of maps from $k-1$ matrices $\varrho_i$ of size $d\times d$ to one matrix of size $d\times d$, acting as
\begin{align}\label{eq:TracePolyMapLambda}
    \Psi_\lambda(\varrho_1,...,\varrho_{k-1}):=\tr_{1,...,k-1}(P_\lambda( \varrho_1\otimes\cdots\otimes \varrho_{k-1}\otimes\one))\,.
\end{align}
The maps of this form are multilinear, namely linear in each entry $\varrho_i$. Using linear algebra and positivity of the Young projectors, $P_\lambda\geq 0$, one verifies that these maps are positive $\Psi_\lambda(\varrho_1,...,\varrho_{k-1})\geq 0$ on positive semidefinite matrices $\varrho_i\geq 0$~\cite{Huber2020PositiveMA}. 

Consider a collection of positive maps $\Psi_\lambda$,..., $\Psi_\nu$ constructed in this way from  immanants $\imm_\lambda$,..., $\imm_\nu$. Following the notation in Eq.~\eqref{eq:ImmNotation}, we will denote their linear combination as
\begin{equation}\label{eq:PosMapsNotation}
    \Psi_{(a_\lambda,...,a_\nu)}:=a_\lambda\Psi_{\lambda}+...+a_{\nu}\Psi_{\nu}\,.
\end{equation}
By reasoning analogously as above, we see that if $\psi_{\vec{a}}$ defines an immanant inequality, then $\Psi_{\vec{a}}$ defines a positive map~\cite{Huber2021MatrixFO}. These maps where shown in~\cite{Huber2021MatrixFO} to have several applications in quantum information theory. For example, this family of maps recovers well-known maps like the Cayley-Hamiltonian map~\cite{lew1966generalized} and the reduction map given in Eq.~\eqref{eq:RedMap}, through $\Psi_{(+1,-1)}(\varrho)=R(\varrho)$. Thus a straightforward application is the detection of entanglement, recovering for example the reduction criterion: if
\begin{equation}\label{eq:RedCrit}
    R\otimes\id(\varrho_{AB})\not\geq 0\,,
\end{equation}
then $\varrho_{AB}$ is entangled.

A further step was done in~\cite{TP_Rico24}. Collections of $n$ $k$-partite witnesses and positive operators of the form~\eqref{eq:WitMaassen} were evaluated onto $k$ copies of an $n$-partite state $\varrho$: each $k$-partite operator is evaluated onto $k$ local copies of each subsystem. Then a global nonlinear witness
\begin{equation}
    \WW = R_1\otimes...\otimes R_n
\end{equation}
detects entanglement in $\varrho^{\otimes k}$ through a trace polynomial expression that can be evaluated with randomized measurements~\cite{Elben2020,Elben22Toolbox}.

\section{Extension and combination of multilinear positive maps}\label{sec:BipEntPmapsResults}
\subsection{Extension with local degrees of freedom}\label{subsec:addLocDegofF}
Here we extend the families of multilinear positive maps and witnesses constructed in~\cite{Huber2020PositiveMA,Huber2021MatrixFO,MaassenSlides,TP_Rico24,PmapsWBAlg_Miqueleta2024} sketched above, as follows. 
\begin{proposition}\label{prop:GeneralMap}
Let the immanant inequality
\begin{equation}\label{eq:ImmEqPropMain}
\sum_{\lambda\vdash k}a_\lambda \imm_\lambda(G)\geq 0    
\end{equation}
hold for any $k\times k$ positive matrix $G\geq 0$, and let $E,\varrho_1,...,\varrho_{k-1}$ be positive matrices of size $d\times d$.   
The following multilinear map from $\C^{d^{k-1}}\times \C^{d^{k-1}}$ to $\C^d\times\C^d$,
\begin{align}\label{eq:PosMapMain}
\Psi_{\Vec{a}}^{E}(&\varrho_1\otimes...\otimes\varrho_{k-1}):=\\
&\sum_\lambda a_\lambda\frac{k!}{\chi_\lambda(\id)}\tr_{\setminus k}\Big (P_\lambda E^{\otimes k}(
\varrho_1\otimes ...\otimes \varrho_{k-1}\otimes\one)\Big ),\nonumber
\end{align}
is positive, namely $\Psi_{\Vec{a}}^{E}(\varrho_1\otimes...\otimes\varrho_{k-1})\geq 0$ when $\varrho_i\geq 0\,\,\forall\,\, 1\leq i\leq k-1$, where $\tr_{\setminus k}$ denotes partial trace over all subsystems except the last one.
\end{proposition}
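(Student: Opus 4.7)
The plan is to reduce the positivity of $\Psi^E_{\vec a}$ to the assumed immanant inequality applied to a suitable Gram matrix. Since the map is multilinear in $\varrho_1,\dots,\varrho_{k-1}$, and every positive semidefinite matrix is a convex combination of rank-one projectors, it suffices to verify positivity on pure inputs $\varrho_i = \dyad{v_i}$ for $i=1,\dots,k-1$. Positivity of the output in turn means $\bra{w}\Psi^E_{\vec a}(\dyad{v_1}\otimes\cdots\otimes\dyad{v_{k-1}})\ket{w}\ge 0$ for every $\ket{w}\in\C^d$, so I rename $\ket{v_k}:=\ket{w}$ and aim to show this quantity is precisely the left-hand side of \eqref{eq:ImmEqPropMain} for a Gram matrix that depends on $E$ and $\ket{v_1},\dots,\ket{v_k}$.

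First I convert the partial trace to a full trace: sandwiching $\tr_{\setminus k}(\cdot)$ between $\bra{v_k}$ and $\ket{v_k}$ turns each term in \eqref{eq:PosMapMain} into
\begin{equation}
\tfrac{k!}{\chi_\lambda(\id)}\,\tr\!\Big(P_\lambda\, E^{\otimes k}\,\dyad{v_1}\otimes\cdots\otimes\dyad{v_k}\Big).
\end{equation}
Next I want to absorb the factor $E^{\otimes k}$ into the vectors $\ket{v_i}$. The key observation is that $E^{\otimes k}$ commutes with every permutation operator $\eta_d(\pi)$, because for any $A$ one has $\eta_d(\pi)A^{\otimes k}=A^{\otimes k}\eta_d(\pi)$. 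By \eqref{eq:YoungProjector}, $P_\lambda$ is a linear combination of such $\eta_d(\pi^{-1})$, hence $P_\lambda$ commutes with $(E^{1/2})^{\otimes k}$. Using this together with cyclicity of the trace, I can symmetrically split $E^{\otimes k}=(E^{1/2})^{\otimes k}(E^{1/2})^{\otimes k}$ and rewrite the expression as
\begin{equation}
\tfrac{k!}{\chi_\lambda(\id)}\,\tr\!\Big(P_\lambda\, \dyad{u_1}\otimes\cdots\otimes\dyad{u_k}\Big),
\end{equation}
where $\ket{u_i}:=E^{1/2}\ket{v_i}$.

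Now I apply the bridge identity \eqref{eq:immanant2}: the last display equals $\imm_\lambda(G)$ with $G_{ij}=\braket{u_i}{u_j}=\bra{v_i}E\ket{v_j}$. Because $G$ is the Gram matrix of the vectors $\ket{u_i}$, it is positive semidefinite, so the hypothesis \eqref{eq:ImmEqPropMain} gives
\begin{equation}
\bra{v_k}\Psi^E_{\vec a}(\dyad{v_1}\otimes\cdots\otimes\dyad{v_{k-1}})\ket{v_k}=\sum_{\lambda\vdash k}a_\lambda\,\imm_\lambda(G)\ge 0,
\end{equation}
completing the argument.

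The only non-routine step is verifying that $P_\lambda$ commutes with $E^{\otimes k}$; this is where the particular structure of the Young projector matters. Once that is recognized, the rest is a direct application of \eqref{eq:immanant2} together with the fact that $E^{1/2}\ket{v_i}$ form a Gram system, which automatically guarantees $G\ge 0$ independently of whether $E$ itself is invertible or how the $\ket{v_i}$ are chosen.
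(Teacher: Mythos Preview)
Your argument is correct and is essentially the paper's main-text proof: the key step is that $(E^{1/2})^{\otimes k}$ commutes with $P_\lambda$ by Schur--Weyl duality, so one can pass the square roots across and reduce to the Gram matrix $G_{ij}=\bra{v_i}E\ket{v_j}$. The only cosmetic difference is that the paper packages this as $\Psi^E_{\vec a}(\varrho_1,\dots,\varrho_{k-1})=\sqrt{E}\,\Psi_{\vec a}(\sqrt{E}\varrho_1\sqrt{E},\dots,\sqrt{E}\varrho_{k-1}\sqrt{E})\,\sqrt{E}$ and then cites positivity of $\Psi_{\vec a}$, whereas you go straight to the immanant identity \eqref{eq:immanant2}; both routes are the same computation.
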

This follows from positivity of $\Psi_{\vec{a}}$ in Eq.~\eqref{eq:PosMapsNotation}, as shown below (a less technical and more detailed proof is given in Appendix~\ref{app:ProofMainProp}). 
\begin{proof}
Consider the Hermitian and positive square root $\sqrt{E}$ of $E$, with $\sqrt{E}\sqrt{E}=E$. By Schur-Weyl duality, $\sqrt{E}^{\otimes k}$ commutes with the Young projectors $P_\lambda$ as these are in the center of the group algebra $\C S_k$. Together with the cyclic property of the trace, this implies the following, 
\begin{equation}
   \Psi_{\Vec{a}}^{E}(\varrho_1,...,\varrho_{k-1}) = \sqrt{E} \Psi_{\Vec{a}}(\tilde{\varrho}_1,...,\tilde{\varrho}_{k-1}) \sqrt{E},
\end{equation}
with $\tilde{\varrho}_i=\sqrt{E}\varrho_i\sqrt{E}$. This proves the claim, since $KMK^\dag\geq 0$ for any matrix $K$ if $M\geq 0$.
\end{proof}

As an example, the inequality $\det(G)\geq 0$ gives through Proposition~\ref{prop:GeneralMap} the following map using the $k$-qudit antisymmetrizer $P_{[1^k]}$,
\begin{align}\label{eq:ReductionA-map}
    \Psi_{(0,...,0,1)}^{E}(\{\varrho_i\})&=\tr_{1...k-1}\Big (P_{[1^k]}(E^{\otimes k})\bigotimes_{i=1}^{k-1}\varrho_i\otimes\one\Big )
\end{align}
for all $E,\varrho_i\geq 0$. For $k=2$ we have $P_{[1,1]}=(\one-\eta_d(12))/2$. Therefore, by setting $E=\one$ we recover the reduction map~\eqref{eq:RedMap}, $\Psi_{(0,1)}^{\one}(\varrho)=\one\tr(\varrho)-\varrho\geq 0$. By setting $\varrho=\one$ we have the degree-two polynomial 
$\Psi_{(0,1)}^{E}(\one)=E\tr(E)-E^2\geq 0$, which can be lifted to degree three as $\Psi_{(0,1)}^{E}(E)=E\tr(E^2)-E^3\geq 0$. We denote the positive matrices as $E$ and $\varrho_i$ respectively because in this work the input of a map $\Psi_{\vec{a}}^E$ will be a state to be detected (or a subsystem of it), and the positive matrix $E$ a tool to detect it. 
In general, a large variety of positive maps can be obtained from different immanant inequalities, values of $k$ and arbitrary positive operators $E\geq 0$. Examples following from the positivity of the determinant for $k=3$ are given in Appendix~\ref{app:ExsIneqsN=3}.

Note that the maps constructed through Proposition~\ref{prop:GeneralMap} are symmetric under permutations $\eta_d(\pi)$ with $\pi\in S_{k-1}$ acting on the input $k-1$ systems. This is because (1) the Young projectors $P_\lambda$ are in the center of the symmetric group algebra $\C S_k$ and thus commute with all permutations; and (2) by Schur-Weyl duality the diagonal actions of the symmetric group and the general linear group commute, leading to $E^{\otimes k-1}\eta_d(\pi)=\eta_d(\pi)E^{\otimes k-1}$. This also shows that the set of images of these maps is closed under the diagonal action of the general linear group: for any map $\Psi_{\vec{a}}^E$, if all $\varrho_i$ undergo the same probabilistic operation $K\varrho_i K^\dag$, then there is a map $\Psi_{\vec{a}}^{E'}$ with $E'={K^{-1}}^\dag E K^{-1}$ recovering the image of $\Psi_{\vec{a}}^E$.



    \subsection{Detection of random states}\label{subsec:DetRandomStates}
    It follows from Proposition~\ref{prop:GeneralMap} that entangled $(k-1)$-partite states $\varrho$ are detected if $\Psi_{\vec{a}}^E(\varrho)\not\geq 0$. Here we demonstrate the power of this method in detecting a single copy of random states, by choosing $E$ appropriately. 
    \begin{figure}[tbp]
        \centering
        \includegraphics[scale=0.6]{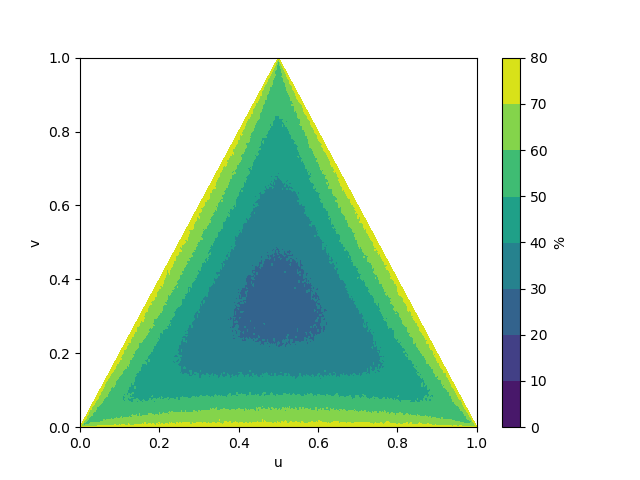}
        \caption{{\bf Detection of two-qutrit random states} by $\Psi_{(0,1)}^{E}\otimes\id(\varrho_{AB})\not\geq 0$ with $E=\diag(a,b,c)$. The axes parametrize $a$, $b$ and $c$ as $u:=(a-b+1)/2$ and $v:=c$ in order to plot the results in the probability simplex of size $3$, and the color bar indicates the percentage of Ginibre random bipartite states~\cite{ginibre1965} detected with in a sample of size 2000. At the baricenter of the triangle lays the case $E=\diag(1,1,1)=\one$, which evaluates the reduction map~\eqref{eq:RedMap} and detects $27\%$ of the random states approximately. The edges contain the case $E=\diag(1,1,0)$, which evaluates the reduction map on a two-dimensional subspace~\eqref{eq:RedMapProj2dim} and increases the detection rate to $74\%$. 
        }
        \label{fig:NPTabcTriangle}
    \end{figure}

    Let us now focus on the case where $E=P$ is a two-dimensional local projector, $P=\diag(1,1,0...,0)\otimes\one$. Then $\Psi_{(0,1)}^{P}(\varrho)$ acting on two-qudit states is equivalent to the reduction map acting on the projected matrix $P\varrho P$, namely
    \begin{equation}\label{eq:RedMapProj2dim}
        \Psi_{(0,1)}^{P}(\varrho) = R(P\varrho P)\,.
    \end{equation}
     Figure~\ref{fig:NPTabcTriangle} and Table~\ref{tab:DetectionRateAmap} summarize the detection with either $\Psi_{(0,1)}^{E}\otimes\id(\varrho_{AB})$ or $\id\otimes\Psi_{(0,1)}^{E}(\varrho_{AB})$ on a random sample of Ginibre random states $\varrho_{AB}$. In Figure~\ref{fig:NPTabcTriangle} we use diagonal operators $E$ with different coefficients. In Table~\ref{tab:DetectionRateAmap} we use an operator $E:=P=P_A\otimes\one$ with $\diag(1,1,0,...,0)$, namely a 2-dimensional projector. The superscript $P_A$ means that only party $A$ is projected, and the superscript $P_{A(B)}$ means that either party $A$ or $B$ are projected. The superscript $P_A^\pi$ means that detection of random states is tested for all possible permutations $\pi$ of diagonal entries in $P_A^\pi=\diag(\pi(1,1,0,...,0))$, in order to detect states with two-dimensional entanglement in any possible qubit-qudit subspace. The superscript $P_A^{U,\pi}$ means that the latter is tested for a sample of random unitaries $U$, to avoid restriction onto diagonal projectors in the computational basis. The results of this analysis lead to the following observations.
    \begin{observation}\label{obs:ActivationReduction}
    For local dimensions $2\leq d_A=d_B\leq 6$, most entangled Ginibre random states are not detected by the reduction map acting on a whole subsystem (last row),
    \begin{equation}\label{eq:RedMapNoDet}
        R_A\otimes\id_B(\varrho_{AB}) \geq 0\,;
    \end{equation}
    but can be detected after a local projection $P_A$ enabling distillation (all but last row),
    \begin{equation}\label{eq:RedMapProjAct}
        R_A\otimes\id_B\Big ((P_A\otimes\one_B)\varrho_{AB}(P_A\otimes\one_B)\Big ) \not\geq 0\,.
    \end{equation}
    \end{observation}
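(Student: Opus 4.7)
The plan is to verify Observation~\ref{obs:ActivationReduction} numerically via Monte Carlo sampling on the Ginibre ensemble, reproducing the setup that generates Table~\ref{tab:DetectionRateAmap} and Fig.~\ref{fig:NPTabcTriangle}. Concretely, for each local dimension $d\in\{2,3,4,5,6\}$ I would draw a large sample of complex Gaussian matrices $X\in\C^{d^2\times d^2}$ and form the normalized Ginibre states $\varrho_{AB}=XX^\dag/\tr(XX^\dag)$. For each sampled $\varrho_{AB}$ I would then compute the minimal eigenvalues of $R_A\otimes\id_B(\varrho_{AB})$ and of $R_A\otimes\id_B\bigl((P_A\otimes\one_B)\varrho_{AB}(P_A\otimes\one_B)\bigr)$, with $P_A=\diag(1,1,0,\dots,0)$. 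The observation is the empirical statement that the first minimum is typically nonnegative while the second is typically negative, so the conclusion is obtained by comparing the two tallies for each $d$ and checking that the second exceeds the first.

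The theoretical content underlying the numerics is the following. A direct check shows that on qubits the reduction map coincides with the partial transpose up to a local unitary conjugation, $R(\varrho)=\sigma_y\varrho^T\sigma_y$, so these two maps provide equivalent positivity tests whenever side $A$ is two-dimensional. Consequently the test in Eq.~\eqref{eq:RedMapProjAct} is equivalent to asking whether the unnormalized projected operator $(P_A\otimes\one)\varrho_{AB}(P_A\otimes\one)$ is NPT when viewed as a $2\times d$ bipartite matrix, and it therefore detects every entangled $2\times d$ projection. By contrast, the unprojected reduction map in Eq.~\eqref{eq:RedMapNoDet} is strictly weaker than the full partial transpose and is known to miss most high-dimensional random NPT states. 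Since the marginals of Ginibre states are concentrated near the maximally mixed state and their two-dimensional projections generically inherit strong entanglement, one expects the projected test to fire far more often than the unprojected one, which is precisely what the observation records.

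The main difficulty is not logical but statistical and methodological. On the statistical side, the word ``most'' and the reported percentages require a sample size large enough to be significant across all five dimensions and to yield stable values; the sample of $2000$ states used in Fig.~\ref{fig:NPTabcTriangle} is a sensible benchmark, and I would check stability by running several independent batches. On the methodological side, the rank-two projector $P_A=\diag(1,1,0,\dots,0)$ singles out a fixed computational two-plane, and in principle different choices of this plane could yield different detection rates; however, by unitary invariance of the Ginibre measure the transformation $\varrho_{AB}\mapsto(U_A\otimes\one_B)\varrho_{AB}(U_A^\dag\otimes\one_B)$ shows that the detection probability depends only on the rank of $P_A$, so the particular choice of two-plane is immaterial in expectation. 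With these points addressed, Observation~\ref{obs:ActivationReduction} follows directly from the tabulated Monte Carlo data.
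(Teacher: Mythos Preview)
Your proposal is correct and follows essentially the same approach as the paper: the observation is purely empirical, supported by Monte Carlo sampling of Ginibre random states as summarized in Table~\ref{tab:DetectionRateAmap} (with $10^4$ samples rather than the $2000$ of Fig.~\ref{fig:NPTabcTriangle}). Your additional theoretical remarks---the qubit identity $R(\varrho)=\sigma_y\varrho^T\sigma_y$ making the projected test equivalent to the $2\times d$ PPT criterion, and the unitary invariance of the Ginibre measure rendering the particular two-plane immaterial---are correct and give more explanation than the paper itself, which only alludes to the PPT connection in the surrounding discussion.
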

    In essence, here entanglement between qudit pairs is detected as qubit-qudit entanglement. Note that the detection rate of generic states with this method is comparable to that of the PPT-criterion~\cite{HoroRedCrit_1999,ReviewQEnt_Horo2009} (the second row of Table~\ref{tab:DetectionRateAmap}). The number of two-dimensional subspaces where to project scales with the Newton binomial $d!/2!(d-2)!$ with respect to the local dimension $d$. Comparatively, the rate of states detected such subspaces is significantly higher than that of the standard reduction map, which tends to zero when local dimensions grow. Since violating the reduction criterion implies distillability~\cite{HoroRedCrit_1999,clarisse2006PhDDistillation}, qubit-qudit entanglement distillation becomes accessible from qudit-qudit states detected in this way, where the projector $P$ acts as a filtering operation.
    \begin{table}[h]
        \centering
        \begin{tabular}{l c c c c c c}
        \hline
        
           {\bf Detection}  & \textbf{$d=2$} & \textbf{$d=3$} & \textbf{$d=4$} & \textbf{$d=5$} & \textbf{$d=6$} & Distillable \\
           \hline

            PPT-criterion & $7566$ & $9999$ & $10000$ & $10000$ & $10000$ & ?~\cite{HorodeckiBoundEnt_1998}  \\
             $\Psi_{(0,1)}^{P_{A(B)}^{U,\pi}}$  & $7566$ & $9998$ & $10000$ & $10000$ & $10000$ & Yes \\$\Psi_{(0,1)}^{P_{A(B)}^\pi}$  & $7566$ & $9924$ & $9991$ & $9988$ & $9869$ & Yes \\
          $\Psi_{(0,1)}^{P_{A(B)}}$ & $7566$ & $9016$ & $8259$ & $6231$ & $3172$ & Yes \\
            $\Psi_{(0,1)}^{P_A}$~\eqref{eq:RedMapProj2dim}  & $7566$ & $7406$ & $6078$ & $3874$ & $1732$ & Yes  \\
           $\Psi_{(0,1)}^\one$~\eqref{eq:RedMap} & $7566$ & $2741$ & $10$ & $0$ & $0$ & Yes~\cite{HoroRedCrit_1999} \\

           \hline
        \end{tabular}
        \caption{{\bf Number of generic bipartite detected states with $\Psi_{(0,1)}^{P_A}$.} A set of $10^4$ Ginibre random bipartite states $\varrho_{AB}$ with local dimension $d_A=d_B:=d$ is sampled. We consider the number of states detectd by the PPT-criterion~\cite{GeoQuantStates2006} ({\em first row});  
        by either $\Psi_{(0,1)}^{P_A}\otimes\id(\varrho_{AB})\not\geq 0$ or $\id\otimes\Psi_{(0,1)}^{P_B}(\varrho_{AB})\not\geq 0$ with $P=\diag(1,1,0,...,0)$, considering $50d$ Haar-random unitaries $U_A\otimes U_B$ and all permutations $\pi$ acting on $P_{A(B)}$ -- denoted by $P_{A(B)}^{U,\pi}$ ({\em second row}); 
        by either $\Psi_{(0,1)}^{P_A}\otimes\id(\varrho_{AB})\not\geq 0$ or $\id\otimes\Psi_{(0,1)}^{P_B}(\varrho_{AB})\not\geq 0$, considering all permutations $\pi$ on $P_{A(B)}$ -- denoted by $P_{A(B)}^\pi$ ({\em third row}); 
        by either $\Psi_{(0,1)}^{P_A}\otimes\id(\varrho_{AB})\not\geq 0$ or $\id\otimes\Psi_{(0,1)}^{P_B}(\varrho_{AB})\not\geq 0$ ({\em fourth row}); 
        by $\Psi_{(0,1)}^{P_A}\otimes\id(\varrho_{AB})\not\geq 0$ ({\em fifth row}); 
        and by the reduction map~\eqref{eq:RedMap} ({\em sixth row}).}
        \label{tab:DetectionRateAmap}
    \end{table}



\begin{figure*}[tbp]
    \centering
    \includegraphics[width=1.\linewidth]{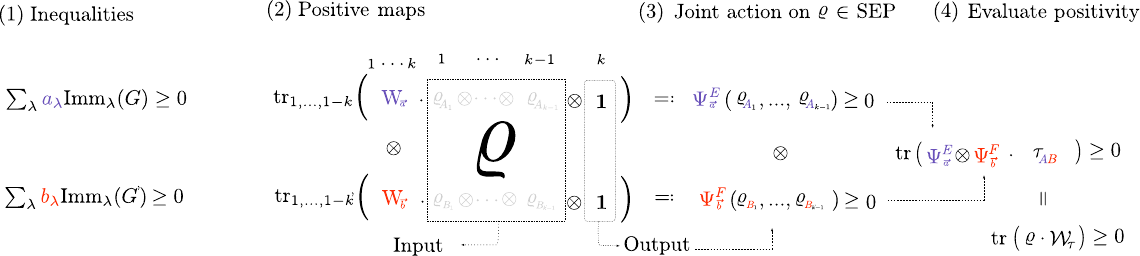}
    \caption{{\bf Entanglement detection scheme proposed in this work}. The method employed here works in the following steps for $n=2$ matrix inequalities, whose extensions to an arbitrary number $n$ are straightforward. (1) Chose two immanant inequalities with coefficients $\{a_\lambda\}$ and $\{b_\lambda\}$ given by $\vec{a}$ and $\vec{b}$, acting on matrices of size $k$ and $k'$. (2) Construct the corresponding $k$- and $k'$-partite witnesses or positive operators $W_{\vec{a}}=E^{\otimes k}\sum_\lambda a_\lambda P_\lambda$ and $W_{\vec{b}}=F^{\otimes k'}\sum_\lambda b_\lambda P_\lambda$, in the spirit of~\cite{MaassenSlides}. Following Eq.~\eqref{eq:CombineMaps}, these will be used as Jamio{\l}kowski matrices~\cite{jamiolkowski1972linear} of positive maps $\Psi_{\vec{a}}:\mathcal{L}(\C^d)^{\otimes k-1}\rightarrow \mathcal{L}(\C^d)$ and $\Psi_{\vec{b}}:\mathcal{L}(\C^d)^{\otimes k'-1}\rightarrow \mathcal{L}(\C^d)$ constructed in Proposition~\ref{prop:GeneralMap}, where we set $k\geq k'$. In their joint input space we place a $(k+k'-2)$-partite quantum state $\varrho$ to be detected. (3) If such a state is separable (a convex combination of product states $\varrho_{A_1}\otimes...\otimes\varrho_{A_{k-1}}\otimes\varrho_{B_1}\otimes...\otimes\varrho_{B_{k'-1}}$), the joint action of the two maps outputs a product of positive matrices $\Psi_{\vec{a}}\otimes\Psi_{\vec{b}}$. (4) Positivity of the output is then evaluated with a $(n=2)$-partite state or witness $\tau$, that detects entanglement in $\varrho$ if $\tr(\Psi_{\vec{a}}\otimes\Psi_{\vec{b}}\cdot\tau)<0$. In the dual formulation of this procedure, $\WW_\tau=\tr_{A_kB_{k'}}(W_{\vec{a}}\otimes W_{\vec{b}}(\one\otimes\tau))$ is a witness for $\varrho$ ({\em state-witness contraction} technique of~\cite{short}).}
    \label{fig:GeneralFramework}
\end{figure*}

\subsection{Combination of multilinear maps}
Here we will evaluate a multilinear (in fact bilinear) map on one party of two copies of a bipartite state $\varrho_{AB}$. Conversely, one can combine several multilinear maps to detect a larger entangled state, such that each of the maps acts on certain subsystems. That is, consider $n$ multilinear maps $\Psi_{\vec{a_j}}^{E_j}$ of the form~\eqref{eq:PosMapMain} acting on $k-1$ systems. Then a $n\times(k-1)$-partite state $\varrho$ can be detected if
\begin{equation}\label{eq:CombineMaps}
    \Psi_{\vec{a_1}}^{E_1}\otimes\dots\otimes \Psi_{\vec{a_n}}^{E_n}(\varrho)\not\geq 0,
\end{equation}
where each map $\Psi_{\vec{a_j}}^{E_j}$ acts on the subsystems $(j-1)(k-1)+1,...,j(k-1)$. Thus $\Psi_{\vec{a_1}}^{E_1}\otimes\dots\otimes \Psi_{\vec{a_n}}^{E_n}$ can be seen as a multilinear map acting on $n(k-1)$ subsystems constructed from those acting in its subsets. This technique is depicted in Fig.~\ref{fig:GeneralFramework} and will be the main focus of the coming sections, which provide detailed insight and expand upon the results of~\cite{short}. 
Note that we considered all multilinear maps to act on the same number of subsystems $k-1$, but the technique can be extended to combining multilinear maps with different input sizes $k_j-1$ to detect states shared among $k_1+...+k_n-n$ local parties.

For example, consider the positive map of Eq.~\eqref{eq:ReductionA-map} defined from the inequality $\det(G)\geq 0$. In particular, let us consider its action onto $k-1$ copies of a bipartite state $\varrho_{AB}$ (Fig.~\ref{fig:RedCritKcopies}), 
\begin{align}\label{eq:REdmapkcopies}
\Psi^{\det}_A\big (\varrho_{AB}^{\otimes {k-1}})&:={\Psi_{(0,...,0,1)}^{\one}}_{A}\otimes\id_{B}(\varrho_{AB}^{\otimes {k-1}})\\
&=\tr_{A_{1.k-1}}({P_{[1^k]}}_{A_{1.k}}\otimes\one_{B_{1.k-1}}\,\varrho_{AB}^{\otimes k-1}\otimes \one_{A_k}\big ),\nonumber
\end{align}
where we define ${A_{1.k-1(k)}}:=A_1...A_{k-1}(A_{k})$ and ${B_{1.k-1}}:=B_1...B_{k-1}$. It is clear that if $\varrho_{AB}$ is separable, then
\begin{equation}\label{eq:REdCritkcopies}
\Psi^{\det}_A(\varrho_{AB}^{\otimes {k-1}})\geq 0\,,
\end{equation}
and thus 
Notice that the reduction criterion~\eqref{eq:RedCrit} is obtained by~\eqref{eq:REdCritkcopies} for $k=2$, namely when using a single copy of $\varrho_{AB}$ and the two-qudit antisymmetrizer~\eqref{eq:2quditAntiSym}.
\begin{figure}[tbp]
    \centering
    \includegraphics[width=1\linewidth]{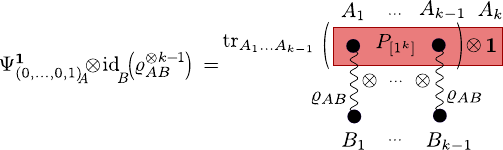}
    \caption{{\bf Nonlinear entanglement detection} with the action of $\Psi^{\one}_{(0,...,0,1)}$ on the party $A$ of $k$ copies of a bipartite state $\varrho_{AB}$, namely Eq.~\eqref{eq:REdmapkcopies}. A state is detected if the resulting operator acting on $B_1...B_{k-1}A_k$ is not positive semidefinite, see Eq.~\eqref{eq:REdCritkcopies}. The case $k=2$ recovers the (single-copy) reduction criterion~\eqref{eq:RedCrit}, whose detection capabilities on generic states are significantly improved by the (two-copy) criterion obtained for $k=3$ (Obs.~\ref{obs:MultCopyGenRMapGinibre}).}
    \label{fig:RedCritKcopies}
\end{figure}

\begin{observation}\label{obs:MultCopyGenRMapGinibre}
    In a sample of 1000 two-qutrit Ginibre random states~\cite{ginibre1965}, the reduction criterion~\eqref{eq:RedCrit}~\cite{HoroRedCrit_1999} can detect 249 of them; but 979 are detected with two copies as Eq.~\eqref{eq:REdCritkcopies} (see Fig.~\ref{fig:RedCritKcopies}) with $k=3$ is not satisfied.
\end{observation}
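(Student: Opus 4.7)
The statement is a numerical observation rather than a theorem, so the ``proof'' plan is really a protocol for the numerical experiment supporting it. My plan is to generate the sample of $1000$ two-qutrit Ginibre random states, then apply the two criteria in parallel and tally the number of detections.

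First, I would sample the states: draw $1000$ matrices $X$ of size $9\times 9$ with i.i.d.\ complex Gaussian entries (real and imaginary parts independent, standard normal), and set $\varrho_{AB} = X X^\dagger / \tr(X X^\dagger)$. This is the standard Ginibre construction of~\cite{ginibre1965}, which produces full-rank random mixed states distributed according to the Hilbert-Schmidt measure. Next, for each $\varrho_{AB}$ I would evaluate the single-copy reduction criterion by forming the $9\times 9$ operator $R_A\otimes\id_B(\varrho_{AB}) = \tr_A(\varrho_{AB})\otimes\one_B - \varrho_{AB}$, diagonalizing it, and flagging the state as detected if its minimum eigenvalue is below a chosen numerical tolerance.

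In parallel, for the two-copy criterion I would build the $k=3$ antisymmetrizer $P_{[1^3]}$ on $(\C^3)^{\otimes 3}$ explicitly as $\frac{1}{6}\sum_{\pi\in S_3}\sign(\pi)\eta_3(\pi)$ acting on the $A$-systems, tensor with $\one$ on the two $B$-systems and on $A_k=A_3$, and compute
\begin{equation}
\Psi^{\det}_A(\varrho_{AB}^{\otimes 2}) = \tr_{A_1 A_2}\!\bigl(P_{[1^3]}\otimes \one_{B_1 B_2}\cdot \varrho_{AB}^{\otimes 2}\otimes \one_{A_3}\bigr)
\end{equation}
as a $27\times 27$ Hermitian matrix on $B_1 B_2 A_3$, again flagging the state as detected if the minimum eigenvalue is below tolerance. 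The final counts should reproduce $249$ and $979$ respectively.

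The main obstacle I anticipate is the bookkeeping for the two-copy evaluation: correctly ordering the tensor factors so that $P_{[1^3]}$ acts on $A_1 A_2 A_3$ while the $B_i$ act as spectators, and performing the partial trace over $A_1 A_2$ consistently. A useful sanity check is that the method of Eq.~\eqref{eq:REdmapkcopies} with $k=2$ must reproduce the reduction map exactly, which gives a cross-check on the code before moving to $k=3$. A secondary concern is the choice of numerical tolerance for declaring negativity, since a handful of near-zero eigenvalues could shift the detection counts by a few units; in practice a tolerance of order $10^{-10}$ times the operator norm is safely below the scale of the genuinely negative eigenvalues one encounters for Ginibre states and should not affect the reported figures.
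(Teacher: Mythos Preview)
Your protocol is exactly what is required: the Observation is purely numerical and the paper offers no proof beyond reporting the counts, so sampling Ginibre states and running the two positivity tests is the only route, and your description of both tests matches the paper's Eqs.~\eqref{eq:RedCrit} and~\eqref{eq:REdmapkcopies}. One notational slip to fix before coding: the single-copy reduction map on $A$ gives $\one_A\otimes\tr_A(\varrho_{AB})-\varrho_{AB}$, not $\tr_A(\varrho_{AB})\otimes\one_B-\varrho_{AB}$ (as written you have two $B$-space operators in tensor product); your $k=2$ sanity check would catch this immediately.
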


\section{New witnesses from state-witness contractions}\label{sec:MultipEnt}
\subsection{New multipartite witnesses}\label{subsec:NewWitnesses}
We have seen that Proposition~\ref{prop:GeneralMap} offers a variety of methods to detect entanglement. Through the Choi-Jamio{\l}kowski isomorphism, it yields entanglement witnesses of the form
\begin{equation}\label{eq:WitMaassenExt}
 W = E^{\otimes n}\sum_{\lambda\vdash n}\frac{a_\lambda}{\chi_\lambda(\id)}P_{\lambda},
\end{equation}
where $\chi_\lambda(\id)$ is the character of the identity in the partition $\lambda$ and $P_\lambda$ the Young projector onto a subspace with specific symmetries, generalizing the construction of~\cite{MaassenSlides} (see Appendix~\ref{app:GenMultiMapsWits}). Using positivity of Eq.~\eqref{eq:CombineMaps}, these can be combined using state-witness contraction~\cite{short} (Fig.~\ref{fig:ContrWit}):
\begin{proposition}\label{prop:WitPartContr}
 Let $\{W^{(i)}\}_{i=1}^n$ be $k$-partite witnesses or positive operators constructed from immanant inequalities as in Eq.~\eqref{eq:WitMaassenExt}. Let $\tau$ be a $n$-partite state or witness in the subsystem $\mathcal{S}=\{k,2k,...,nk\}$. Then, the operator
 \begin{equation}\label{eq:WitGenkn}
  \WW_\tau := \tr_{\mathcal{S}}\Big ( W^{(1)}_{1,...,k}\otimes...\otimes W^{(n)}_{k(n-1),...,nk}\,\cdot\,\one\otimes\tau_{\mathcal{S}} \Big )
 \end{equation}
has a nonnegative expectation value on $n\times(k-1)$-partite separable states, namely
 \begin{equation}
  \tr(\WW_\tau\varrho)\geq 0
 \end{equation}
if $\varrho$ is fully separable. 
\end{proposition}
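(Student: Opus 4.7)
The plan is to show the inequality by unfolding the state--witness contraction into the action of positive maps, for which Proposition~\ref{prop:GeneralMap} already does the hard work. First I would recall that, by the Choi--Jamio{\l}kowski isomorphism, each $W^{(i)}$ of the form~\eqref{eq:WitMaassenExt} is the Jamio{\l}kowski matrix of a positive map $\Psi^{(i)}:=\Psi_{\vec{a}_i}^{E_i}$ from Proposition~\ref{prop:GeneralMap}, with the natural convention (used throughout the preliminaries) that
\begin{equation*}
\Psi^{(i)}(\sigma_1\otimes\cdots\otimes\sigma_{k-1})
= \tr_{1,\dots,k-1}\!\Big(W^{(i)}\cdot(\sigma_1\otimes\cdots\otimes\sigma_{k-1}\otimes\one)\Big),
\end{equation*}
so that the last tensor factor of $W^{(i)}$ — labelled by $ik\in\mathcal{S}$ — plays the role of the output leg of $\Psi^{(i)}$.

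Next I would carry out the contraction. Inserting the definition~\eqref{eq:WitGenkn} of $\WW_\tau$ into $\tr(\WW_\tau\varrho)$ and using the cyclicity of the trace to move $\tau_{\mathcal{S}}$ past the partial trace over $\mathcal{S}$, one obtains
\begin{equation*}
\tr(\WW_\tau\varrho)
=\tr\Big(\tau_{\mathcal{S}}\cdot\tr_{\bar{\mathcal{S}}}\!\big((W^{(1)}\otimes\cdots\otimes W^{(n)})(\varrho\otimes\one_{\mathcal{S}})\big)\Big).
\end{equation*}
The inner partial trace is, by the previous paragraph, precisely the image of $\varrho$ under the product map $\mathcal{M}:=\Psi^{(1)}\otimes\cdots\otimes\Psi^{(n)}$, where the $i$-th factor acts on subsystems $(i-1)k+1,\dots,ik-1$. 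Hence $\tr(\WW_\tau\varrho)=\tr\big(\tau\cdot\mathcal{M}(\varrho)\big)$, which reduces the proposition to checking that $\mathcal{M}(\varrho)$ pairs nonnegatively with $\tau$.

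Now I would exploit the assumption that $\varrho$ is fully separable. Writing $\varrho=\sum_j p_j\bigotimes_{i,\ell}\sigma_j^{(i,\ell)}$ and using multilinearity of each $\Psi^{(i)}$, one gets
\begin{equation*}
\mathcal{M}(\varrho)=\sum_j p_j\,\bigotimes_{i=1}^n\Psi^{(i)}\!\big(\sigma_j^{(i,1)}\otimes\cdots\otimes\sigma_j^{(i,k-1)}\big).
\end{equation*}
By Proposition~\ref{prop:GeneralMap} each factor on the right is positive semidefinite, so $\mathcal{M}(\varrho)$ is a convex combination of tensor products of positive matrices — in particular an (unnormalised) $n$-partite separable operator on $\mathcal{S}$. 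Since $\tau$ is by hypothesis either an $n$-partite state or an $n$-partite entanglement witness, it has nonnegative expectation on every separable operator, giving $\tr(\tau\cdot\mathcal{M}(\varrho))\geq 0$ as required.

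The main obstacle I expect is purely bookkeeping: keeping the labels of the $nk$ tensor factors, the subset $\mathcal{S}$ of ``output'' legs, and the partition of $\varrho$ into $n$ blocks of $k-1$ subsystems consistent with the way Proposition~\ref{prop:GeneralMap} places the identity on the last slot. Once that indexing is fixed, the argument reduces to the two observations used above, namely that state--witness contraction turns a Jamio{\l}kowski matrix into the action of the corresponding map, and that a product of positive maps sends separable inputs to separable outputs.
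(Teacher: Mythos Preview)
Your argument is correct and follows essentially the same route as the paper's proof in Appendix~\ref{app:ProofPropWitnesses}: rewrite the contraction as the action of the product positive map $\Psi^{(1)}\otimes\cdots\otimes\Psi^{(n)}$ on $\varrho$ via the partial-trace identity, then pair the output with $\tau$. Your explicit observation that $\mathcal{M}(\varrho)$ is not merely positive semidefinite but \emph{separable} on $\mathcal{S}$ is in fact slightly sharper than the appendix version, which literally only invokes $\tr(XY)\geq 0$ for positive $\tau$; your formulation is what cleanly covers the case where $\tau$ is an entanglement witness rather than a state.
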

This can be seen from positivity of Eq.~\eqref{eq:CombineMaps} using the Choi-Jamio{\l}kowski isomorphism, and the proof is detailed in Appendix~\ref{app:ProofPropWitnesses}.
One can think of {\em contracting} the $k$'th party of all $n$ witnesses (or positive operators) with the $n$-partite state or witness $\tau$, to create a new witness in the $(k-1)\times n$ non-contracted subsystems~\cite{short}. A specific example is given in Fig.~\ref{fig:ContrWit}. Similarly as in Eq.~\eqref{eq:CombineMaps}, extensions using witnesses acting on different number of subsystems, $W^{(1)}_{1,...,k},...,W^{(n)}_{1,...,k'}$, as depicted in Fig.~\ref{fig:GeneralFramework}, are straightforward and can detect $(k-1)+...+ (k'-1)$-partite states. 

By construction, witnesses constructed in this way are highly symmetric. Since the Young projectors $P_\lambda$ with $\lambda\vdash (k-1)$ are in the center of $\C S_{k-1}$ (the group algebra generated by linear combinations of permutations in the Hilbert space representation), each witness $\WW_\tau$ is invariant under the action of $S_{k-1}^{\times n}$. That is,
\begin{equation}
    \eta_d(\pi)\otimes...\otimes\eta_d(\upsilon)\,\WW_\tau\,\eta_d(\pi^{-1})\otimes...\otimes\eta_d(\upsilon^{-1}) = \WW_\tau
\end{equation}
where $\pi\in S_{k-1},...,\upsilon\in S_{k-1}$ are $n$ permutations of $k-1$ elements. Moreover, since Young projectors are linear combinations of permutations, $P_\lambda\in\C S_{k-1}$, Schur-Weyl duality implies that $\WW_\tau$ is invariant under the block-diagonal action of the unitary group,
\begin{equation}
    (U\otimes...\otimes V)^{\otimes k-1}\,\WW_\tau\,{(U^{\dag}}\otimes...\otimes {V^{\dag}})^{\otimes k-1}=\WW_\tau\,,
\end{equation}
where $U,...,V\in\mathcal{U}(d)$ are $n$ unitary matrices.

\begin{table}[tbp]
    \centering
    \begin{tabular}{l c c c c c}
    \hline
        {\bf $W$} & $3!P_{3}-\one$ & $P_{3}-P_{2,1}/4$ & $P_{2,1}/4-P_{1,1,1}$ & $\one-3!P_{1,1,1}$ \\
        \hline
       $d=2$  & $1$ & $1$ & PSD & PSD \\
       $d=3$ & $1$ &  $1$ & $2$ & $1$  \\
       $d=4$ & $1$ & $1$ & $2$ & $1$   \\
       $d=5$ & $1$ & $1$ & $2$ & $1$\\  
       \hline
    \end{tabular}
    \caption{{
   \bf Maximum number $0\leq t\leq 3$ such that} $\min\{\tr(XW):X,X^{T_1},...,X^{T_t}\geq 0\}<0$. By symmetry of the Young projectors $P_{\lambda}$, it is enough to impose positivity of the partial transpose in the first $t$ subsystems. Here PSD denotes that $W$ is a positive semidefinite operator, and not a witness, which occurs because $P_\lambda=0$ if $|\lambda|>d$.}
    \label{tab:PPTi}
\end{table}
\begin{figure*}[]
    \centering
    \includegraphics[width=1\linewidth]{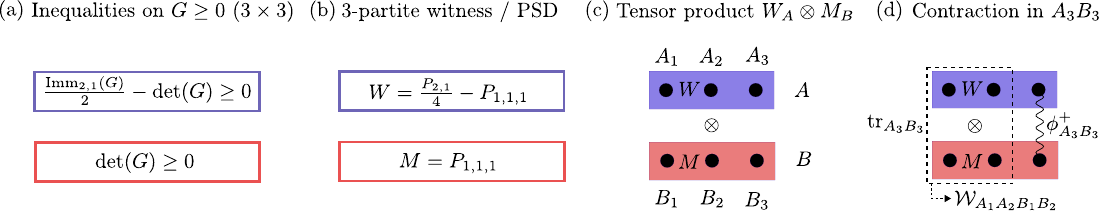}
    \caption{{\bf Construction of a four-partite witness from two matrix inequalities.} (a) Consider two immanant inequalities on $3\times 3$ positive semidefinite matrices $G$. (b) Each inequality can be used to construct a three-partite witness $W$ or positive operator $M$~\cite{MaassenSlides}. 
    (c) Take the tensor product $W_A\otimes M_B$, which can be used for example as a witness on three copies $\varrho_{A_1B_1}\otimes\varrho_{A_2B_2}\otimes\varrho_{A_3B_3}$~\cite{TP_Rico24}. (d) Here we contract the last subsystem with a Bell state $\phi^+=\dyad{\phi^+}$ shared between parties $A_3$ and $B_3$, as displayed in Eqs.~\eqref{eq:WitGenkn} and~\eqref{eq:Wit4partNonDeco}. Proposition~\ref{prop:WitPartContr} shows that we are left with a four-partite entanglement witness $\WW_\tau$ in the subsystems $A_1A_2B_1B_2$, and Observation~\ref{obs:4PartWitDetBoundEnt} shows that it can detect states with local positive partial transpositions.}
    \label{fig:ContrWit}
\end{figure*}

\subsection{Detection of states with local positive partial transpositions}\label{subsec:WitBoundEntanglement}
Here we will address the question of whether witnesses constructed from immanant inequalities can detect bound entanglement, expanding on the results in~\cite{short}. For known techniques using immanant inequalities, the answer is either unknown~\cite{Huber2021MatrixFO,Huber2022DimFree,MaassenSlides,PmapsWBAlg_Miqueleta2024,TP_Rico24} or negative~\cite{Huber2020PositiveMA}. Even using broader classes of trace polynomial techniques evaluated through randomized measurements~\cite{Elben22Toolbox}, most existing results are focused on evaluating purities and moments of the partial transpose~\cite{Elben2020,Neven_SymmetryResMomentsPT2021}, which are suitable to highly entangled states; and most results in detecting bound entanglement with trace polynomials apply to bipartite~\cite{ImaiBoundfromRM_2021,liu2022MomentsPermutations,zhang2023experimentalverificationboundmultiparticle} or multiqubit systems~\cite{CollectiveRM_Imai2024}, and rely on approximating established criteria from copies of the state in hand. 

As a brief analysis, Table~\ref{tab:PPTi} shows that three-partite states with positive local partial transpositions cannot be detected by witnesses of the form ~\eqref{eq:WitMaassen} in~\cite{MaassenSlides} constructed from well-known immanant inequalities. However, Proposition ~\ref{prop:WitPartContr} suggests a reuse of such witnesses: states with two positive partial transpositions can be detected (Table~\ref{tab:PPTi}), and therefore witness-state contraction~\cite{short} might give rise to a larger witness detecting states with local positive partial transpositions.  Indeed, consider the witness
\begin{equation}\label{eq:Wit2deco}
 W = \frac{P_{2,1}}{4}-P_{1,1,1}\,,
\end{equation}
where the Young projectors are explicitly given in Appendix~\ref{App:YoungProjectors}, and the positive operator $P_{1,1,1}$, constructed from the inequalities $\imm_{2,1}(G)/2-\det(G)\geq 0$ and $\det(G)\geq 0$ respectively. For simplicity, we will use the shorthand notation $A_{12}:=A_1A_2$ and $A_{123}:=A_1A_2A_3$ ($B_{12}:=B_1B_2$ and $B_{123}:=B_1B_2B_3$) to concatenate local subsystems $A_i$ ($B_i$).
\begin{observation}[~\cite{short}]\label{obs:4PartWitDetBoundEnt}
Let $\ket{\phi^+}=\sum_{i=0}^{2}\ket{ii}/\sqrt{3}$ be the two-qutrit Bell state, and choose $\tau=\dyad{\phi^+}$; let $W$ be the witness of Eq.~\eqref{eq:Wit2deco}, and let $P:=P_{1,1,1}$ be the three-qutrit antisymmetrizer. The four-qutrit entanglement witness
 \begin{equation}\label{eq:Wit4partNonDeco}
  \WW_{\tau} = \tr_{A_3B_3}\Big (W_{A_{123}}\otimes P_{B_{123}}\cdot\one_{A_{12}B_{12}}\otimes\tau_{A_3B_3}\Big )
 \end{equation}
 can detect four-partite entangled states $\varrho_{A_{12}B_{12}}$ with $\varrho^{T_{A_1}},\varrho^{T_{A_2}},\varrho^{T_{B_1}},\varrho^{T_{B_2}}\geq 0$ through
 \begin{equation}
  \tr(\WW_{\tau}\varrho)<0\,.
 \end{equation}
\end{observation}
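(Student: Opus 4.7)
The plan is to verify the two ingredients of the claim separately: that $\WW_\tau$ is a legitimate entanglement witness, and that a locally PPT entangled four-qutrit state exists on which its expectation value is strictly negative.

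For the first ingredient, note that $W = P_{2,1}/4 - P_{1,1,1}$ is already recorded as a three-qutrit witness in the $P_{2,1}/4-P_{1,1,1}$ column of Table~\ref{tab:PPTi} and has the form of Eq.~\eqref{eq:WitMaassenExt} with $E=\one$; the operator $P=P_{1,1,1}\geq 0$ is a positive operator of the same form; and $\tau = \dyad{\phi^+}$ is a bipartite state. Applying Proposition~\ref{prop:WitPartContr} with $n=2$ and $k=3$ yields directly that $\tr(\WW_\tau\,\varrho_{\SEP})\geq 0$ on every fully separable four-qutrit state, so the witness property is free. Only existence of a detected locally PPT state remains.

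For the second ingredient, I would take the natural ansatz $\varrho = \sigma_{A_1B_1}\otimes\sigma_{A_2B_2}$, where $\sigma$ is any $3\times 3$ PPT bound entangled state (for instance the Choi state, a tiles or pyramid UPB state, or a member of the Horodecki one-parameter family). Since $\sigma^{T_A},\sigma^{T_B}\geq 0$ by assumption, all four local partial transpositions $\varrho^{T_{A_i}},\varrho^{T_{B_i}}$ are automatically positive; and since $\sigma$ is entangled, $\varrho$ is not fully separable. To evaluate $\tr(\WW_\tau\varrho)$, I would use the standard Bell-contraction identity $\tr_{A_3B_3}\!\left((X_{A_3}\otimes Y_{B_3})\dyad{\phi^+}\right)=\tr(X^{T} Y)/3$ to fold the contracted systems and rewrite $\tr(\WW_\tau\varrho)$ as a trace of $(\sigma\otimes\sigma)$ against an explicit two-qutrit-by-two-qutrit operator built from the Young projectors $P_{2,1}$, $P_{1,1,1}$ and a partial transpose on $A_3\leftrightarrow B_3$. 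The resulting quantity is a finite polynomial in the entries of $\sigma$ that is amenable to direct computation.

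The main obstacle is exhibiting a concrete $\sigma$ for which this polynomial is strictly negative. If the pure tensor ansatz $\sigma\otimes\sigma$ is not detected for any standard $\sigma$, I would enlarge the class to $(F_A\otimes F_B)^{\otimes 2}(\sigma\otimes\sigma)(F_A\otimes F_B)^{\otimes 2,\dag}$ with invertible local filters $F_A,F_B$, or to convex mixtures thereof, all of which preserve the four local PPT conditions. I would then carry out a numerical search over the standard $3\times 3$ PPT bound entangled families and the accompanying filters, relying on the explicit detected state documented in the companion paper~\cite{short} to close the argument.
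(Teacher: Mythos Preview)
Your approach matches the paper's: the witness property is obtained exactly as you say, by invoking Proposition~\ref{prop:WitPartContr} with $n=2$, $k=3$ and the building blocks $W=P_{2,1}/4-P_{1,1,1}$, $P=P_{1,1,1}$, $\tau=\dyad{\phi^+}$, which is precisely the argument the paper sets up in the text preceding the observation. For the existence of a detected locally-PPT state the paper gives no self-contained construction either --- the observation is labeled with the citation~\cite{short} and the concrete state is deferred there --- so your strategy of attempting a product ansatz and then falling back on the numerical search and explicit example of~\cite{short} is exactly in line with (and slightly more detailed than) what the paper itself does.
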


\subsection{Detection through multicopy witnesses}
Here we will shortly expand the results of state-witness contraction~\cite{short} in detecting a state thorugh multiple copies. That is, the multipartite case discussed above applies in particular when considering $k-1$ copies of a $n$-partite state, $\varrho^{\otimes k-1}$, as the state to be detected (e.g. in Fig.~\ref{fig:GeneralFramework}). Consider $n$ positive maps derived through Proposition~\ref{prop:GeneralMap} from $n$ Immanant inequalities,  $\Psi_{\Vec{\alpha_l}}^{X^{(l)}}:=\Psi_l$. 
It follows from Proposition~\ref{prop:WitPartContr} that for separable states we have
\begin{equation}\label{eq:ThetaGeneral}
\theta_\tau:=\tr\big (\Psi_1\otimes...\otimes\Psi_n(\varrho^{\otimes k-1})\tau\big )\geq 0\,,
\end{equation}
whose violation is a criterion to detect entanglement in $\varrho$. As a trivial example, consider $\Psi^{\one}_{(0,+1)}$ acting on $k-1=1$ copy of $A$ and $\Psi^{E}_{(+1,-1)}$ acting on one copy of the party $B$. 
We obtain that inequalities like
\begin{equation}
    \tr\Big (E\varrho_BE\tau_B\Big ) - \tr\Big ((\one\otimes E)\varrho_{AB}(\one\otimes E)\tau_{AB}\Big )\geq 0
\end{equation}
hold for any state $\tau_{AB}$ and positive operator $E$ if $\varrho_{AB}$ is separable. In particular for $\tau=\varrho$ and $E=\one$ we recover the purity inequality $\tr(\varrho_B^2)-\tr(\varrho_{AB}^2)\geq 0$~\cite{EntDetRev_Guhne2009}.

By setting $E=\one$ in all positive maps $\Psi_l$ and $\tau=\varrho$ we recover the family of witnesses acting on $k$ copies of $\varrho$ that we introduced in~\cite{TP_Rico24}, which for the bipartite case can be written as
\begin{equation}\label{eq:WitTP-ent}
    \WW = \big (\sum_\lambda a_\lambda P_\lambda\big )_{A_1...A_k}\otimes\big (\sum_\mu b_\mu P_\mu\big )_{B_1...B_k}\,.
\end{equation}
Expressions obtained in this way evaluate $k$'th moments of the state $\varrho$ and its reductions, generalizing purity inequalities. These expressions are invariant under local unitaries and implementable with randomized measurements~\cite{Elben2020,Elben22Toolbox}, as shown in~\cite{TP_Rico24}.

\section{Conclusions}
We have introduced a family of positive maps from matrix inequalities, able to detect entanglement in a variety of configurations. Standard methods like the reduction map or purity inequalities, as well as recently introduced more powerful techniques, are recovered as special cases. We analyze the simplest example, where significant advantage over the reduction map is shown in a more general family. This shows that the detection of entangled states can be considerably improved by projecting locally on low-dimensional subspaces. We also provide further details on how the framework introduced in~\cite{short} allows to find the first known combination of immanant inequalities that can detect entangled states whose local partial transpositions are positive. The methods discussed explore a highly symmetric corner of the space of entanglement witnesses, whose limitations are at the moment unknown. For future work, it would be interesting to explore possible combinations of the methods introduced and other methods such as the symmetric extension hierarchy~\cite{doherty2004complete}. 

The code and data used for this project are openly available at~\cite{code-data}.

\bigskip

{\em Acknowledgements:\,\,} The author is thankful to 
Gerard Anglès Munné, 
Matt Hoogsteder-Riera, 
Pawe\l{} Horodecki, 
Felix Huber, 
Robin Krebs, 
Some Sankar Bhattacharya, 
Anna Sanpera, 
Andreas Winter 
and 
Karol \.{Z}yczkowski 
for discussions. Special thanks to Felix Huber for pointing out a first example of Proposition~\ref{prop:GeneralMap} and for editorial suggestions, and to Karol \.{Z}yczkowski for suggesting a symmetric parametrization in Fig.~\ref{fig:NPTabcTriangle}.  Support by the Foundation for Polish Science through TEAM-NET project  POIR.04.04.00-00-17C1/18-00
and by NCN QuantERA
Project No. 2021/03/Y/ST2/00193 is gratefully acknowledged. The author also acknowledges financial support from Spanish MICIN (projects: PID2022:141283NBI00;139099NBI00) with the support of FEDER funds, the Spanish Goverment with funding
from European Union NextGenerationEU (PRTR-C17.I1), the Generalitat de Catalunya,
the Ministry for Digital Transformation and of Civil Service of the Spanish Government through the QUANTUM ENIA project -Quantum Spain Project- through the Recovery, Transformation and Resilience Plan NextGeneration EU within the framework
of the Digital Spain 2026 Agenda.  

\addcontentsline{toc}{subsection}{Bibliography}
\bibliographystyle{ieeetr}
\bibliography{Bibliography}{}

\appendix
\section{Examples of Young projectors}\label{App:YoungProjectors}
Consider the Hilbert space of $k=3$ qudits, $\mathcal{H}=\C^d\otimes\C^d\otimes\C^d$. The symmetric group $S_3$ has three invariant subspaces: the symmetric, standard, and antisymmetric subspace. These correspond respectively to the partitions $(3,0,0)$, $(2,1,0)$ and $(1,1,1)$, and are defined by the following Young projectors,
\begin{align}
    P_{3}&=\frac{1}{6}\eta_d\Big (\big [(\id)+(12)+(13)+(23)+(123)+(132)\big ]\Big )\,, \\
    P_{2,1}&= \frac{1}{6} \eta_d\Big (\big [2(\id)-(123)-(132)\big ]\Big )\,, \\
    P_{1^3}&= \frac{1}{6}\eta_d\Big (\big [(\id)-(12)-(13)-(23)+(123)+(132)\big ]\Big )\,.
\end{align}
These have the property that $P_\lambda=0$ if $|\lambda|<d$~\cite{Keyl-WernerEstimatingSpectra_2001}. For example the three-qubit antisymmetrizer vanishes, namely
\begin{equation}
    P_{1,1,1}=0
\end{equation}
for $d=2$.

\section{Proof of Proposition~\ref{prop:GeneralMap}}\label{app:ProofMainProp}
\begin{proof}
Let us expand Eq.~\eqref{eq:PosMapMain}
into
\begin{align}\label{eq:PosMapMainExpanded}
    \Psi_{\Vec{a}}^{E}(\varrho_1,...,\varrho_{k-1})=&\sum_\lambda a_\lambda\sum_{\pi\in S_k}\chi_\lambda(\pi)E\prod_{r=2}^R\Big (\varrho_{\alpha^{(\pi)}_r}E\Big )\nonumber\\
        \tr\Big [&\prod_{s=1}^S\Big (\varrho_{\beta^{(\pi)}_s}E\Big )\Big ]\nonumber\\
        \cdots\tr\Big [&\prod_{t=1}^T\Big (\varrho_{\xi^{(\pi)}_t}E\Big )\Big ]\,,
\end{align}
where each permutation $\pi$ is written in its cycle form $\pi=(\alpha)(\beta)...(\xi)=(\alpha_1...\alpha_R)(\beta_1...\beta_S)...(\xi_1...\xi_T)$ with the convention $\alpha_1=1$. 

Let us now define the inner product $\langle u,w\rangle=\bra{u}X\ket{w}$ with $\ket{u},\ket{w}\in\mathbb{C}^{d}$ and write $G$ as a Gram matrix,
    \begin{equation}
    G=
    \begin{pmatrix}
        \bra{i_1}X\ket{i_1} & \cdots & \bra{i_1}X\ket{i_n} \\
        \vdots & \ddots &  \\
        \bra{i_n}X\ket{i_1} &  & \bra{i_n}X\ket{i_n} \\
    \end{pmatrix}\,.
    \end{equation}
By expanding each permutation 
$\pi\in S_k$ into cycles, $\pi=(\alpha)...(\xi)$, a linear combination of Immanants $\imm_\lambda$ which is positive on any matrix $G$ reads
    \begin{align}
        &\sum_\lambda a_\lambda\imm_\lambda (G)=\sum_\lambda a_\lambda\sum_{\pi\in S_k}\chi_\lambda(\pi)\prod_{l=1}^n\bra{i_l}X\ket{i_{\pi(l)}}=\nonumber\\
        &\sum_\lambda a_\lambda\sum_{\pi\in S_k}\chi_\lambda(\pi)\bra{i_{\alpha^{(\pi)}_{1}}}X\ket{i_{\alpha^{(\pi)}_{2}}}
        \cdots\bra{i_{\alpha^{(\pi)}_{R}}}X\ket{i_{\alpha^{(\pi)}_{1}}}\nonumber\\
        &\quad\quad\quad\quad\quad\quad\,\,...\nonumber\\
        &\,\quad\quad\quad\quad\quad\bra{i_{\xi^{(\pi)}_{1}}}X\ket{i_{\xi^{(\pi)}_{2}}}
        \cdots
        \bra{i_{\xi^{(\pi)}_{T}}}X\ket{i_{\xi^{(\pi)}_{1}}}\geq 0\,.
    \end{align}
    Let us now take a convex combination with nonnegative coefficients $x_i\geq 0$ and identify the positive semidefinite $d\times d$ matrices $\{\varrho_{\alpha^{(\pi)}_{v}}\}=\big \{\sum_{i_{\alpha^{(\pi)}_{r}=1}}^d x_{i_{\alpha^{(\pi)}_{r}}}|i_{\alpha^{(\pi)}_{r}}\rangle\langle i_{\alpha^{(\pi)}_{r}}|\big \}$ for the first cycle $\alpha$ of the permutation $\pi$, and similarly for the rest of cycles $\beta,...,\xi$ of $\pi$. 
    
    By assumption, recall that $\alpha^{(\pi)}_{1}=1$ in all terms of the summation. Therefore we can factorize $\langle i_{\alpha^{(\pi)}_{1}}|$ to the left hand side and $|i_{\alpha^{(\pi)}_{1}}\rangle$ to the right hand side, and take a linear combination of the whole expression in order to have it in terms of $\{\varrho_{\alpha^{(\pi)}_{r}}\}$,...,$\{\varrho_{\xi^{(\pi)}_{t}}\}$.
    It follows that
    \begin{align}
        \langle i_{\alpha^{(\pi)}_{1}}|\Psi_{\Vec{a}}^{E}(\varrho_1,...,\varrho_{k-1})| i_{\alpha^{(\pi)}_{1}}\rangle\nonumber
    \end{align}
    is nonnegative for any vector $| i_{\alpha^{(\pi)}_{1}}\rangle\in\mathbb{C}^{d}$ and hence the multilinear map $\Psi_{\Vec{a}}^{E}$ is positive, which proves the claim.
\end{proof}

\section{New positive maps from Proposition~\ref{prop:GeneralMap} for $n=3$}\label{app:ExsIneqsN=3}
Let us now consider the case $n=3$. From positivity of the determinant in $3\times 3$ positive semidefinite matrices, we obtain the positive map
\begin{align}\label{eq:PositiveMapsFamilyN=3}
\Psi_{(0,1)}^{E}(Y,Z)&=X\tr[YX]\tr[ZX]+XYXZX+XZXYX \nonumber\\
&-XYX\tr[ZX]-X\tr[YXZX]\nonumber\\
&-XZX\tr[YX]\geq 0  
\end{align}
for any three positive semidefinite matrices $X,Y,Z\geq 0$, which is linear in $Y$ and $Z$. By setting $Z=\one$, we have
\begin{align}\label{eq:PositiveMapsFamilyN=3C=Id}
    \Psi_{(0,1)}^{E}(Y,\one)&=X\tr[YX]\tr[X]+XYX^2+X^2YX\\
    &-XYX\tr[X]-X\tr[YX^2]-X^2\tr[YX]\geq 0\,.\nonumber
\end{align}
By construction, with $X=\one$ we recover the following map defined in~\cite{lew1966generalized,Huber2020PositiveMA},
\begin{align}\label{eq:Cayley-hamiltoniand=3}
    \Psi_\one(Y,Z)&\one\tr[Y]\tr[Z]+YZ+ZY\\
    &-Y\tr[Z]-\one\tr[YZ]-Z\tr[Y]\geq 0\,.\nonumber
\end{align}
If in equation~\eqref{eq:PositiveMapsFamilyN=3} we choose $C=A=\one$ (or similarly $B=A=\one$), we are left with the reduction map up to a rescaling,
\begin{equation}
\begin{aligned}
    \Psi_\one(Y,\one)=(d-2)\one\tr[Y]-(d-2)Y\geq 0,
\end{aligned}
\end{equation}
where $d$ is the size of the positive matrix $Y$.

\section{Multipartite entanglement from a single immanant inequality}\label{app:GenMultiMapsWits}
Here we will show how to obtain positive maps and witnesses to detect multipartite entanglement from an immanant inequality, generalizing the connections established in~\cite{MaassenSlides,Huber2021MatrixFO}.

\begin{cor}\label{cor:PosMapYoungProj}
Let $\varrho$ be a quantum state shared among $(n-1)$ parties with local dimension $d$. Let $\sum_{\lambda\vdash n}a_\lambda\imm_\lambda(G)\geq 0$ be an immanant inequality for any $n\times n$ positive matrix $G\geq 0$. If $\varrho$ is fully separable, then
\begin{equation}\label{eq:Method2}
    \tr_{1,...,n\setminus 1}\Bigg [\sum_{\lambda\vdash n} a_\lambda P_\lambda(E^{\otimes n})(\one_d\otimes\varrho)\Bigg ]\geq 0
\end{equation}
for any $d\times d$ positive semidefinite matrix $X\geq 0$.
\end{cor}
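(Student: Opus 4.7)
The plan is to derive this statement as a direct consequence of Proposition~\ref{prop:GeneralMap} combined with convexity. If $\varrho$ is fully separable on $n-1$ parties, then by definition $\varrho=\sum_i p_i\,\varrho_1^{(i)}\ot\cdots\ot\varrho_{n-1}^{(i)}$ for probabilities $p_i\geq 0$ and positive semidefinite matrices $\varrho_j^{(i)}\geq 0$ of size $d\times d$. I would substitute this decomposition into the left-hand side of Eq.~\eqref{eq:Method2} and use linearity of the partial trace to pull the convex sum outside, reducing the claim to proving that each individual product-state term is a positive semidefinite operator on the remaining subsystem $1$.

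Next, for each fixed $i$, I would identify the resulting expression with the map $\Psi_{\vec{a}}^{E}$ from Proposition~\ref{prop:GeneralMap} applied to $\varrho_1^{(i)}\ot\cdots\ot\varrho_{n-1}^{(i)}$, up to the overall factors $k!/\chi_\lambda(\id)$ that can be absorbed into the coefficients $a_\lambda$ without affecting the sign of the inequality. The only cosmetic difference between the two statements is that Proposition~\ref{prop:GeneralMap} sets the free identity tensor factor $\one$ as the last slot and traces over $\{1,\dots,k-1\}$, whereas here the identity $\one_d$ sits in the first slot and one traces over $\{2,\dots,n\}$. This discrepancy is harmless because both $E^{\ot n}$ and the Young projectors $P_\lambda$ are invariant under the action of $S_n$ permuting the $n$ tensor factors: the former trivially, and the latter because $P_\lambda$ lies in the center of $\C S_n$ and thus commutes with every $\eta_d(\pi)$. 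Consequently, conjugating the whole expression by $\eta_d((1\,n))$ moves the free slot from position $1$ to position $n$ without altering its value.

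With this identification in place, Proposition~\ref{prop:GeneralMap} immediately yields that each term in the convex decomposition is positive semidefinite. A convex combination of positive semidefinite operators with nonnegative weights $p_i$ is again positive semidefinite, which concludes the argument.

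The only potential obstacle is bookkeeping: I must be careful that the conventions for the partial trace, the placement of $\one_d$ versus $\varrho_j^{(i)}$, and the identification of the parameter $E$ in the corollary with the $E$ of Proposition~\ref{prop:GeneralMap} are all consistent (in particular, the symbol $X$ appearing in the quantification of Eq.~\eqref{eq:Method2} should be read as $E$). Once the permutation symmetry of $P_\lambda$ and $E^{\ot n}$ is invoked, there is no further mathematical content beyond what is already established in Proposition~\ref{prop:GeneralMap}.
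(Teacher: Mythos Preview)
Your proposal is correct and follows essentially the same approach as the paper: decompose the fully separable state as a convex combination of product states, apply Proposition~\ref{prop:GeneralMap} to each term, and conclude by convexity. Your treatment is in fact more careful than the paper's one-line proof, since you explicitly address the slot permutation (first versus last tensor factor) via the $S_n$-invariance of $P_\lambda$ and $E^{\ot n}$, and you flag the bookkeeping issue with the normalization factors $k!/\chi_\lambda(\id)$ and the $X$ versus $E$ typo.
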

\begin{proof}
The proof follows from Proposition~\ref{prop:GeneralMap} by decomposing a fully separable state $\varrho$ as
\begin{equation}
    \varrho=\sum_i p_i \varrho_1^i\otimes\cdots\otimes \varrho_{k-1}^i
\end{equation}
where $p_i\geq 0$, $\sum_i p_i=1$ and $\varrho_l^i$ are quantum states for all $i$ and $l$. It is now clear that Eq.~\eqref{eq:Method2} is a convex combination of positive operators.
\end{proof}
This allows to extend the linear techniques for entanglement detection derived in~\cite{MaassenSlides,Huber2020PositiveMA,Huber2021MatrixFO} by adding a local operator $E^{\otimes n}$: 
Using that an operator $M$ is positive semidefinite if and only if $\tr(MN)\geq 0$ for all positive semidefinite operators $N\geq 0$; and that $\tr\big (\tr_1 (M) N\big ) = \tr\big (M(\one\otimes N)\big )$, 
Corollary~\ref{cor:PosMapYoungProj} implies that if an $n$-partite state $\varrho$ acting on $\cdn$ is fully separable and $\sum_{\lambda\vdash n}a_\lambda\imm_\lambda(G)\geq 0$ is an immanant inequality, then the operator
\begin{equation}\label{eq:WitMaassenApp}
 W = E^{\otimes n}\sum_{\lambda\vdash n}\frac{a_\lambda}{\Xi_\lambda(\id)}P_{\lambda}
\end{equation}
has a nonnegative expectation value on $\varrho$, $\tr(W\varrho)\geq 0$. In particular $W$ is a witness if it is not positive semidefinite. Physically, this can be thought of all local parties agreeing on performing the same local filter $F$ with $FF^\dag=X$ and then evaluating a witness using the constructions of~\cite{MaassenSlides}, which are recovered when the filter $F$ is unitary.

As an example for bipartite states, let us choose the inequality $\prod_iG_{ii}-\det(G)\geq 0$ for $3\times 3$ positive semidefinite matrices $G\geq 0$. By Corollary~\ref{cor:PosMapYoungProj}, 
\begin{equation}
    \tr_{AB}((\one-6P_{1,1,1})(X^{\otimes 3})(\varrho_{AB}\otimes\one_C))\geq 0
\end{equation}
if the state $\varrho_{AB}$ is separable. In terms of expectation values, this statement for bipartite states means that for three-partite systems,
\begin{equation}
    W = (\one-6P_{1,1,1})X^{\otimes 3}
\end{equation}
is an entanglement witness for any positive operator $X\geq 0$. Note that this recovers the known witness $\one-6P_{1,1,1}$~\cite{MaassenSlides,TP_Rico24} for $X=\one$. 

\section{Proof of Proposition~\ref{prop:WitPartContr}}\label{app:ProofPropWitnesses}
\begin{proof}
To construct a new class of witnesses to detect $k\times(n-1)$-partite entangled states using the technique in~\cite{short}, we will employ a set of $k$ $n$-partite witnesses or positive operators constructed from immanant inequalities according to~\eqref{eq:WitMaassen}~\cite{MaassenSlides},
\begin{equation}
 W^{(i)}:=\sum_{\lambda\vdash n}a^{(i)}_\lambda P_\lambda\,,
\end{equation}
where each witness (or positive operator) $W^{(i)}$ is labeled by $i = 1,...,k$.
Consider their tensor product,
\begin{equation}
 O = W^{(1)}_{1,...,n}\otimes W^{(2)}_{n+1,...,2n}\otimes...\otimes W^{(k)}_{n(k-1),...,nk}\,.
\end{equation}
Now let $\varrho$ be a $k\times(n-1)$-partite state. It follows from Corollary~\ref{cor:PosMapYoungProj} that if $\varrho$ is fully separable, then
\begin{equation}\label{eq:PosMapnxkStates}
 \tr_{1...kn\setminus \{in\}_{i=1}^k}  \Big (O\cdot\varrho\otimes \one_{n,2n,...,kn}\Big )\geq 0
\end{equation}
is positive semidefinite. 

We will use that an operator $X$ is positive semidefinite, $X\geq 0$, if and only if its inner product with any positive semiedefinite operator $Y\geq 0$ is nonnegative,
\begin{equation}
 \tr(XY)\geq 0.
\end{equation}
Therefore, positivity of Eq.~\eqref{eq:PosMapnxkStates} is equivalent to the following equation being nonnegative for any $k$-partite positive operator $\tau$,
\begin{equation}\label{eq:PosMapnxkStatesTrace}
 \tr  \Big (O\cdot\varrho\otimes \tau_{n,2n,...,kn}\Big )\geq 0\,.
\end{equation}
Using the cooridnate-free definition of the partial trace,
\begin{equation}
 \tr(M(N\otimes\one))=\tr(\tr_1(M)N)\,,
\end{equation}
Eq.~\eqref{eq:PosMapnxkStatesTrace} can be written as
\begin{equation}
 \tr\Big (\tr_{\{in\}_{i=1}^k}\big (O(\one\otimes\tau_{\{in\}_{i=1}^k})\big )\cdot \varrho_{1,...,kn\setminus \{in\}_{i=1}^k}\Big )\geq 0\,.
\end{equation}
This shows the claim.
\end{proof}

\end{document}